\newcommand{\zall}{\boldsymbol{z}_{1:k}}
\newcommand{\zalmost}{\boldsymbol{z}_{1:k-1}}
\newcommand{\dd}{\mathrm{d}}
\newcommand{\bbpp}{\mathbf{p}}
\newcommand{\nrm}  [1] {\Vert #1 \Vert}
\newcommand{\opdiv}{\operatorname{div}}
\begin{document}

\title{Numerical approximation of the Frobenius-Perron operator using the finite volume method}
\shorttitle{Numerical approximation of the Frobenius-Perron operator} 
\shortauthorlist{R.A. Norton, C. Fox and M.E. Morrison} 
\author{{
\sc Richard A. Norton}$^*$,\\[2pt]
Department of Mathematics and Statistics, University of Otago, \\
730 Cumberland Street, Dunedin, New Zealand, \\
$^*${\email{Corresponding author: richard.norton@otago.ac.nz}}\\[2pt]
{\sc and}\\[6pt]
{\sc Colin Fox and Malcolm E. Morrison } \\[2pt]
Department of Physics, University of Otago, \\
730 Cumberland Street, Dunedin, New Zealand.
}

\maketitle

\begin{abstract}
{
We develop a finite-dimensional approximation of the Frobenius-Perron operator using the finite volume method applied to the continuity equation for the evolution of probability. A Courant-Friedrichs-Lewy condition ensures that the approximation satisfies the Markov property, while existing convergence theory for the finite volume method guarantees convergence of the discrete operator to the continuous operator as mesh size tends to zero. Properties of the approximation are demonstrated in a computed example of sequential inference for the state of a low-dimensional mechanical system when observations give rise to multi-modal distributions.
}
{Frobenius-Perron operator; finite volume method; Markov operator; sequential inference.}
\end{abstract}

\section{Introduction}

Consider a dynamical system with state $x\in\mathbb{R}^d$ that evolves according to the ordinary differential equation
\begin{equation}
\label{eq:dynam}
	\frac{\dd x}{\dd t} = v(x)
\end{equation}
where $v(x)$ is a known velocity field.  We have written~\eqref{eq:dynam} in standard autonomous form; if $v = v(x,t)$ then writing $\tilde{x} = (x,t) \in \mathbb{R}^{d+1}$ and $\tilde{v}(\tilde{x}) = (v(x,t),1)$ gives the autonomous form $\tilde{x}_t = \tilde{v}(\tilde{x})$.

Given the velocity field $v(\cdot)$ and an initial state $x_0$, \eqref{eq:dynam} can be solved to determine the future state of the system.  However, if the initial state is uncertain, and is distributed according to a probability density function (pdf) $p_0(x)$, then the future state of the system at time $t>0$ is distributed according to some pdf $p(x|t)$.  The operator that maps the initial pdf $p_0(\cdot)$ to the future pdf $p(\cdot|t)$ is called a \emph{Frobenius-Perron operator}. 

Our interest is in performing sequential inference on the state of the dynamical system from incomplete and uncertain observations of the system at times $0<t_1, t_2, \ldots$, and when the initial state is uncertain. Such problems have been well studied (see, e.g., \cite{Cappe:2005:IHM:1088883}) with well-known solutions for special cases being the Kalman filter and its extensions. In that setting, the Frobenius-Perron operator gives the evolution of pdfs between observations, that is it defines the transition kernel from one time to another. In the language of filtering, the Frobenius-Perron operator defines the \emph{prediction update}, while pdfs between observations enables state estimation or \emph{smoothing}~\cite{Cappe:2005:IHM:1088883}.  This paper is concerned with deriving a numerical approximation to the  Frobenius-Perron operator and understanding the mathematical properties of the resulting discrete operator.

A common route to treating the continuous time system in~\eqref{eq:dynam} is to discretize the dynamics in time (see, e.g., \cite{Golightly2006}) and apply the discrete-time formalism for sequential inference~\cite{Cappe:2005:IHM:1088883}. We take a different approach, instead forming a discrete approximation directly of the continuous time inference problem, and establish  convergence directly on the space of probability distributions.

\subsection{The Frobenius-Perron operator}

Let $X(\cdot,t):\mathbb{R}^d \rightarrow \mathbb{R}^d$ denote the operator that maps an initial condition to the solution of \eqref{eq:dynam} at time $t \geq 0$.  Let $Y(\cdot,t) = X(\cdot,t)^{-1}$ denote its inverse.  If $X(\cdot,t)$ is a \emph{non-singular} transformation ($|Y(E,t)| = 0$ if $|E|=0$ for all Borel subsets $E \subset \mathbb{R}^d$ where $|\cdot|$ denotes Lebesgue measure), then for each $t \geq 0$, the Frobenius-Perron operator $S(t):L^1(\mathbb{R}^d) \rightarrow L^1(\mathbb{R}^d)$ is defined by
$$
	\int_E S(t) f \, \dd x = \int_{Y(E,t)} f \, \dd x \qquad \forall \mbox{ Borel subsets $E \subset \mathbb{R}^d$},
$$
see e.g. \cite[\S 8]{DingZhou}.

Alternatively, given an initial pdf $p_0$, the pdf $p(\cdot|t)$ at some future time, $t>0$, may be computed  by solving the continuity equation (also called the transport or linear advection equation)
\begin{equation}
\label{eq:coneq}
\begin{cases}
	p_t + \opdiv( v p) = 0 & x \in \mathbb{R}^d, \, t > 0, \\
	p(x|0) = p_0(x) & x \in \mathbb{R}^d.
\end{cases}
\end{equation}
Then, for each $t \geq 0$, the Frobenius-Perron operator $S(t): L^1(\mathbb{R}^d) \rightarrow L^1(\mathbb{R}^d)$ is defined such that for any $f \in L^1(\mathbb{R}^d)$,
\begin{equation}
\label{eq:sdef}
	S(t)f := p(\cdot|t) \qquad \mbox{where $p$ is a solution to \eqref{eq:coneq} with $p_0 = f$.}
\end{equation}
See e.g. \cite[Def. 3.2.3 and \S 7.6]{LasotaMackey} or \cite[\S 11.2]{DingZhou} for derivation of the continuity equation from \eqref{eq:dynam}.

The existence of a Frobenius-Perron operator and (weak) solutions to \eqref{eq:coneq} depends on whether or not $X(\cdot,t)$ is non-singular, which depends on the regularity of $v$.    If $v$ has continuous first order derivatives and solutions to \eqref{eq:dynam} exist for all initial points $x_0 \in \mathbb{R}^d$ and all $t \geq 0$ then the Frobenius-Perron operator is well-defined, satisfies the \emph{Markov property}, and $\{S(t): t \geq 0 \}$ defines a continuous semigroup of Frobenius-Perron operators, see \cite[\S 11.2]{DingZhou} or \cite[\S 7.4]{LasotaMackey}.  In such cases the solution to the continuity equation satisfies the ODE $\frac{\dd p}{\dd t} = -(\opdiv v) p$ , along characteristic paths defined by $X(\cdot,t)$.  

\begin{definition} (\cite[Defn. 3.1.1.]{LasotaMackey})
A linear operator $S: L^1(\mathbb{R}^d) \rightarrow L^1(\mathbb{R}^d)$ is a \emph{Markov operator} (or satisfies the \emph{Markov property}) if for any $f \in L^1(\mathbb{R}^d)$ such that $f \geq 0$,
$$
	Sf \geq 0 \qquad \mbox{and} \qquad
	\nrm{Sf}_{L^1(\mathbb{R}^d)} = \nrm{f}_{L^1(\mathbb{R}^d)}.
$$
\end{definition}
We refer to the lesser property $f \geq 0$ implies $Sf \geq 0$ as \emph{positivity preserving}.

The following theorem is due to Liouville (\cite[\S 11.2]{DingZhou},  \cite[\S 7.8]{LasotaMackey}).
\begin{theorem}
\label{thm:stationary}
Let $\{ S(t): t \geq 0 \}$ be the continuous semigroup of Frobenius-Perron operators associated with \eqref{eq:dynam}.  Then for $f \in L^1(\mathbb{R}^d)$ satisfying $\nrm{f}_{L^1(\mathbb{R}^d)} = 1$ and $f \geq 0$, 
$$
	S(t)f=f \quad \forall t \geq 0 \qquad \mbox{if and only if}\qquad \opdiv (f v) = 0.
$$
\end{theorem}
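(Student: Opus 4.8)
The plan is to exploit the defining relationship between the semigroup and the continuity equation \eqref{eq:coneq}, namely that $S(t)f = p(\cdot|t)$ where $p$ is the solution of \eqref{eq:coneq} with initial datum $p_0 = f$. Under the stated hypotheses on $v$ (continuous first derivatives, global existence of trajectories) the semigroup is well defined, which already presupposes that \eqref{eq:coneq} has a \emph{unique} solution for each datum. It is exactly this uniqueness that lets me identify stationary densities of $S$ with time-independent solutions of the continuity equation, and both implications will follow from that identification.

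For the implication $\opdiv(fv) = 0 \Rightarrow S(t)f = f$, I would exhibit the solution explicitly. Take the function that is constant in time, $p(x,t) := f(x)$. Then $p_t = 0$ and $\opdiv(vp) = \opdiv(vf) = 0$ by hypothesis, so $p_t + \opdiv(vp) = 0$, while $p(x,0) = f(x) = p_0(x)$ supplies the initial condition. Hence $p$ solves \eqref{eq:coneq}, and by uniqueness it is \emph{the} solution, giving $S(t)f = p(\cdot|t) = f$ for every $t \geq 0$.

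For the converse $S(t)f = f \Rightarrow \opdiv(fv) = 0$, I would run the same identification backwards. If $S(t)f = f$ for all $t$, then by definition $p(x,t) := (S(t)f)(x) = f(x)$ is the (unique) solution of \eqref{eq:coneq} with $p_0 = f$; being independent of $t$ it satisfies $p_t = 0$, so substituting into the equation yields $\opdiv(vf) = \opdiv(vp) = -p_t = 0$. An equivalent route is to differentiate $t \mapsto S(t)f$ at $t = 0$ to obtain the infinitesimal generator $Af = -\opdiv(vf)$ and to read off $Af = 0$ from stationarity; I prefer the continuity-equation argument because it avoids having to check that $f$ lies in the domain of the generator.

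The step I expect to need the most care is the sense in which the equations are to be read. Since $f \in L^1(\mathbb{R}^d)$ need not be differentiable, both $\opdiv(fv) = 0$ and \eqref{eq:coneq} must be interpreted weakly (distributionally), and the whole argument rests on uniqueness of weak solutions, which is precisely where the regularity of $v$ is used. I would also remark that the normalisation $\nrm{f}_{L^1} = 1$ and positivity $f \geq 0$ are inessential to the equivalence itself — they only serve to present $f$ as a probability density, i.e. an invariant measure — so the characterisation is really a statement about the linear operators $S(t)$ and extends to arbitrary $f \in L^1(\mathbb{R}^d)$ by linearity.
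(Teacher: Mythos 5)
Your argument is correct. Note that the paper does not prove this theorem at all---it is stated as a known result (Liouville's theorem) with citations to \cite[\S 11.2]{DingZhou} and \cite[\S 7.8]{LasotaMackey}---so there is no in-paper proof to compare against; your route via the definition \eqref{eq:sdef} and uniqueness of (weak) solutions to the continuity equation is the standard one found in those references, and your converse step is easily made rigorous by substituting the time-independent $p = f$ into the weak formulation \eqref{eq:weak}, where the $\phi_t$ term cancels the initial-data term and leaves exactly the distributional statement $\opdiv(fv)=0$. The one caveat worth keeping in mind is that uniqueness of weak solutions for merely $L^1$ data is genuinely where the regularity of $v$ enters (you flag this correctly), and your closing remark that positivity and normalisation are inessential to the equivalence is accurate but should not be read as saying every such $f$ is a stationary \emph{density}, since that term is reserved for nonnegative normalised $f$.
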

We call the distribution represented by the pdf $f$ in Theorem~\ref{thm:stationary} a \emph{stationary distribution}.
A corollary to Theorem~\ref{thm:stationary} is that the Lebesgue measure is invariant under $X(\cdot,t)$ if and only if $\opdiv v = 0$ (\cite[\S 11.2]{DingZhou}), see~\eqref{eq:vol} below.

\subsection{Hamiltonian systems}

Stationary points of $\{ S(t): t \geq 0 \}$ may not be unique. For example, a Hamiltonian system with Hamiltonian $H: \mathbb{R}^2 \rightarrow \mathbb{R}$ and $v = (\frac{\partial H}{\partial x_2},- \frac{\partial H}{\partial x_1})$, satisfies both $\opdiv(v) = 0$ and $\opdiv(g(H)) = 0$ for differentiable functions $g:\mathbb{R} \rightarrow \mathbb{R}$.   In particular, normalizing $\exp(-H)$ usually yields a stationary distribution.

To prove that $S(t)$ is asymptotically stable, i.e. that $S(t)f$ will converge to a unique stationary distribution as $t \rightarrow \infty$ for any $f$, we can appeal to theorems such as \cite[Thm. 7.4.1]{LasotaMackey}, or \cite[Thm. 13.3.1]{MeynTweedie}.  Loosely speaking, we can expect the conditions for these theorems to be met when the velocity field is such that the system can move between any two states in phase space, and will return to a small volume in phase space in finite time.   

Hamiltonian systems are generally not asymptotically stable because the system state evolves along contours of the Hamiltonian, so the  pdf over future states $S(t)f$ is inextricably linked to the initial $f$.

In the special case when $\opdiv(v) = 0$, e.g. for a Hamiltonian system, then \cite[Thm. 1, Cor. 1.1, Cor. 1.2]{MerletVovelle2007} provides the following precise statements about the properties of solutions to the continuity equation.  

\begin{theorem}
\label{thm:reg1}
If $v \in W^{1,\infty}(\mathbb{R}^d; \mathbb{R}^d)$, $\opdiv (v) = 0$, and $p_0 \in L^1_{loc}(\mathbb{R}^d)$:
\begin{enumerate}
\item Equation \eqref{eq:coneq} has a unique weak solution in the sense that $p \in L^1_{loc}(\mathbb{R}^d \times [0,\infty))$ satisfies
\begin{equation}
\label{eq:weak}
	\int_0^\infty \int_{\mathbb{R}^d} p (\phi_t + v \cdot \nabla \phi) \, \dd x \dd t + \int_{\mathbb{R}^d} p_0 \phi(x,0) \, \dd x = 0 \quad \forall \phi \in C^\infty_0(\mathbb{R}^d \times [0,\infty)).
\end{equation}
\item The weak solution $p$ is constant along characteristic paths defined by $X \in C^1(\mathbb{R}^d \times [0,\infty); \mathbb{R}^d)$, 
\begin{equation}
\label{eq:const}
	p(x|t) = p_0(X(x,t)) \qquad \forall (x,t) \in \mathbb{R}^d \times [0,\infty).
\end{equation}
Volume is also preserved along these characteristic paths in the sense that 
\begin{equation}
\label{eq:vol}
	|X(E,t)| = |E| \qquad \mbox{for every $t \geq 0$ and every Borel subset $E \subset \mathbb{R}^d$}.
\end{equation}
\item For any $0 \leq t \leq T$ the function $X(\cdot,t)$ defining characteristic paths and its inverse $Y(\cdot,t) = X(\cdot,t)^{-1}$ satisfy $X-Id_{\mathbb{R}^d}, Y - Id_{\mathbb{R}^d} \in W^{1,\infty}(\mathbb{R}^d \times [0,T])$ and there exists a constant $C_0 \geq 1$, depending only on $T$, $\nrm{v}_{W^{1,\infty}}$ and $d$, such that 
\begin{equation}
\label{eq:bounds}
\begin{aligned}
	|X(x,t) - x| &\leq C_0 t, & |\nabla X(x,t)| &\leq C_0, & \left| \frac{\partial X}{\partial t}(x,t) \right| & \leq C_0, \\
	|Y(x,t) - x| &\leq C_0 t, & |\nabla Y(x,t)| &\leq C_0, & \left| \frac{\partial Y}{\partial t}(x,t) \right| & \leq C_0.
\end{aligned}
\end{equation}
\item For any measurable function $f : \mathbb{R} \rightarrow \mathbb{R}$ and $p_0 \in L^1_{loc}(\mathbb{R}^d)$ such that $f \circ p_0 \in L^1(\mathbb{R}^d)$,
$$
	\int_{\mathbb{R}^d} f(p(x|t)) \, \dd x = \int_{\mathbb{R}^d} f(p_0(x)) \, \dd x \qquad \forall t \geq 0.
$$
\end{enumerate}
In addition, if $p_0 \in BV(\mathbb{R}^d)$:
\begin{enumerate}
\setcounter{enumi}{4}
\item For any $0 \leq t \leq T$, there exists a constant $C_1$ depending only on $T$, $\nrm{v}_{W^{1,\infty}}$ and $d$, such that 
\begin{align}
	\nrm{ p(\cdot|s) - p(\cdot|t)}_{L^1} &\leq C_1 \nrm{p_0}_{TV} |s-t|, & 0 \leq s,t \leq T, \label{eq:lip}\\
	\nrm{ p(\cdot|t) }_{TV} &\leq C_1 \nrm{p_0}_{TV}, & 0 \leq t \leq T. \label{eq:tv}
\end{align}
\end{enumerate}
\end{theorem}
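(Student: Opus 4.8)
The plan is to build everything from the flow map of \eqref{eq:dynam} and then read off each claim by the method of characteristics together with the divergence-free hypothesis. Since $v \in W^{1,\infty}$ admits a globally Lipschitz representative, the Cauchy--Lipschitz (Picard--Lindel\"of) theorem supplies a unique global flow $X(x,t)$ solving $\frac{\dd}{\dd t}X = v(X)$ with $X(x,0)=x$, and its inverse $Y(\cdot,t)$ is obtained by integrating backward. To establish the bounds in part 3 I would first write $X(x,t)-x = \int_0^t v(X(x,s))\,\dd s$ and bound the right-hand side by $\nrm{v}_\infty t$; then differentiate the flow in $x$ to get the variational equation $\frac{\dd}{\dd t}\nabla X = \nabla v(X)\,\nabla X$ and apply Gronwall's inequality to obtain $|\nabla X| \leq e^{\nrm{\nabla v}_\infty t}$; finally $\frac{\partial X}{\partial t} = v(X)$ bounds the time derivative by $\nrm{v}_\infty$. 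Taking $C_0 = \max\{1, \nrm{v}_\infty, e^{T\nrm{v}_{W^{1,\infty}}}\}$ covers all six inequalities, and the identical argument for the backward flow handles $Y$; the membership $X - Id_{\mathbb{R}^d}, Y - Id_{\mathbb{R}^d} \in W^{1,\infty}$ then follows from these uniform bounds on the (a.e.) derivatives.

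Next I would obtain volume preservation \eqref{eq:vol} from Liouville's formula: the Jacobian $J(x,t) = \det \nabla X(x,t)$ satisfies $\frac{\dd}{\dd t}J = (\opdiv v)(X)\,J$, which vanishes because $\opdiv v = 0$, so $J \equiv J(\cdot,0) = 1$ and the change-of-variables formula gives $|X(E,t)| = \int_E J\,\dd x = |E|$. Since $\opdiv v = 0$ the transport term expands as $\opdiv(vp) = v\cdot\nabla p$, so the solution is constant along characteristics, yielding the representation \eqref{eq:const}. For part 1 I would take that formula as the \emph{definition} of $p$ and verify the weak form \eqref{eq:weak} directly: change variables $x = X(z,t)$ (Jacobian $1$) inside \eqref{eq:weak}, observe that $(\phi_t + v\cdot\nabla\phi)(X(z,t),t) = \frac{\dd}{\dd t}\phi(X(z,t),t)$, and integrate in $t$ so that the integrand collapses to the boundary term $-p_0(z)\phi(z,0)$. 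Uniqueness I would get from linearity, reducing to showing $p_0 = 0 \Rightarrow p = 0$, via a duality/renormalization argument against smooth solutions of the backward transport equation, the Lipschitz regularity of $v$ making the DiPerna--Lions renormalization available with room to spare.

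Parts 4 and 5 then follow from the representation formula and the flow bounds. For part 4 I change variables $x = X(z,t)$ in $\int f(p(x|t))\,\dd x$; the Jacobian is $1$ and $Y(X(z,t),t)=z$, leaving $\int f(p_0(z))\,\dd z$. For the total-variation bound \eqref{eq:tv} I would use that composition with the bi-Lipschitz map $Y(\cdot,t)$ sends $BV$ to $BV$ with total variation controlled by $\nrm{\nabla Y}_\infty \leq C_0$, the Jacobian factor in the change of variables being $1$ by volume preservation. For the Lipschitz-in-time estimate \eqref{eq:lip} I would write $p(x|s)-p(x|t) = \int_t^s \partial_\tau\big(p_0(Y(x,\tau))\big)\,\dd\tau$, bound $|\partial_\tau Y| \leq C_0$ by part 3, and after the volume-preserving change of variables $x = X(z,\tau)$ recognize the remaining spatial integral as $\nrm{p_0}_{TV}$, producing the factor $C_0\,\nrm{p_0}_{TV}|s-t|$.

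The main obstacle is the uniqueness claim in part 1 at the low regularity $p_0 \in L^1_{loc}$: for smooth data the method of characteristics settles it immediately, but for merely locally integrable data one cannot test the equation against the solution itself and must instead invoke the renormalization property of transport equations with Sobolev (here Lipschitz) velocity fields. The secondary technical nuisances are that $\nabla X$ exists only almost everywhere, so Liouville's formula and the $BV$ chain rule must be justified in the weak-derivative sense rather than classically, and that the $BV$ estimates of part 5 require care in transporting total variation through the bi-Lipschitz change of variables. None of these is conceptually hard given $v \in W^{1,\infty}$, but they are where the argument must be carried out carefully rather than formally.
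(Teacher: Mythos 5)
The paper does not prove this theorem at all: it is quoted verbatim from Merlet and Vovelle \cite[Thm.~1, Cor.~1.1, Cor.~1.2]{MerletVovelle2007}, and the surrounding text makes clear the authors are importing it as a black box to justify that $S(t)$ is a Markov operator (Corollary~\ref{cor:0} and Remark~\ref{rem:1}). So there is no in-paper argument to compare against; what you have written is a self-contained reconstruction of the standard proof, and it is essentially the right one. Your route --- Cauchy--Lipschitz for the global flow, Gronwall for the bounds in part~3, Liouville's formula for \eqref{eq:vol}, the characteristic representation to verify existence in the weak form, duality/renormalization for uniqueness, and change of variables through the measure-preserving bi-Lipschitz flow for parts~4 and~5 --- is the same machinery the cited reference relies on. One notational caution: the paper's convention in \eqref{eq:const} writes $p(x|t) = p_0(X(x,t))$, which forces $X(\cdot,t)$ in the theorem to be the \emph{backward} characteristic map (the inverse of the forward flow used elsewhere in the introduction); your sketch silently uses the forward-flow convention in some places and the backward one in others ($p(X(z,t),t)=p_0(z)$ versus $p(x|t)=p_0(Y(x,t))$), which is internally consistent but should be fixed to one convention before writing this out.

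Two steps in your sketch are where the real work lives, and you have correctly identified one of them. Uniqueness for $p_0 \in L^1_{loc}$ genuinely requires the commutator/renormalization estimate (or an equivalent duality argument with mollified test functions), since the backward dual solution is only Lipschitz and hence not an admissible $C^\infty_0$ test function; with $\nabla v \in L^\infty$ paired against $p \in L^1_{loc}$ the commutator does converge, so the argument closes, but it is not a one-line reduction. The second is the Lipschitz-in-time estimate \eqref{eq:lip}: writing $p(x|s)-p(x|t)=\int_t^s \partial_\tau\bigl(p_0(Y(x,\tau))\bigr)\,\dd\tau$ presupposes a chain rule for $BV\circ\text{Lipschitz}$ that does not hold pointwise; the clean route is to mollify $p_0$, prove \eqref{eq:lip} for smooth data where $\nabla p_0$ is a function, and pass to the limit using lower semicontinuity of the total variation and the $L^1$-stability of the solution map. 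Neither point invalidates the plan, but both must be executed rather than gestured at if you intend this as a proof rather than a citation.
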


Immediate consequences of Theorem~\ref{thm:reg1} that are relevant for this article are given in the following Corollary.
\begin{corollary}
\label{cor:0}
If $v \in W^{1,\infty}(\mathbb{R}^d; \mathbb{R}^d)$, $\opdiv (v) = 0$, and $p_0 \in L^1(\mathbb{R}^d)$:
\begin{enumerate}
\item The weak solution $p$ to \eqref{eq:coneq} satisfies $p \in C([0,\infty);L^1(\mathbb{R}^d))$.
\item If $p_0 \geq 0$, then the weak solution $p$ satisfies $p(x|t) \geq 0$ for all $x \in \mathbb{R}^d$ and $t \geq 0$.
\item The weak solution $p$ satisfies $\nrm{p(\cdot|t)}_{L^1(\mathbb{R}^d)} = \nrm{p_0}_{L^1(\mathbb{R}^d)}$ for all $t \geq 0$.
\end{enumerate}
\end{corollary}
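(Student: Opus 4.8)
The plan is to treat the three assertions separately, deriving parts 2 and 3 directly from the representation and conservation statements in Theorem~\ref{thm:reg1}, and reserving a density argument for the continuity claim in part 1, which is the only part where the hypothesis $p_0 \in L^1(\mathbb{R}^d)$ (rather than the stronger $p_0 \in BV(\mathbb{R}^d)$) forces extra work.

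For part 3 I would simply invoke statement 4 of Theorem~\ref{thm:reg1} with the choice $f(u) = |u|$. Since $p_0 \in L^1(\mathbb{R}^d)$ means exactly that $|p_0| = f \circ p_0 \in L^1(\mathbb{R}^d)$, the conservation identity $\int_{\mathbb{R}^d} |p(x|t)| \, \dd x = \int_{\mathbb{R}^d} |p_0(x)| \, \dd x$ yields $\nrm{p(\cdot|t)}_{L^1(\mathbb{R}^d)} = \nrm{p_0}_{L^1(\mathbb{R}^d)}$ for all $t \geq 0$. For part 2 I would use the representation \eqref{eq:const}, $p(x|t) = p_0(X(x,t))$, together with the volume preservation \eqref{eq:vol}. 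If $N = \{y : p_0(y) < 0\}$ has Lebesgue measure zero, then because $X(\cdot,t)$ is a volume-preserving bijection with inverse $Y(\cdot,t)$, the preimage $\{x : X(x,t) \in N\} = Y(N,t)$ also has measure zero, whence $p(x|t) \geq 0$ for a.e. $x$.

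The substantive part is part 1, where the plan is a density argument exploiting linearity together with the isometry just established. First I would pick $p_0^{(n)} \in C_0^\infty(\mathbb{R}^d) \subset BV(\mathbb{R}^d) \cap L^1(\mathbb{R}^d)$ with $\nrm{p_0^{(n)} - p_0}_{L^1(\mathbb{R}^d)} \to 0$. By uniqueness (statement 1 of Theorem~\ref{thm:reg1}) the solution map $p_0 \mapsto p$ is well defined and, since the weak formulation \eqref{eq:weak} is linear in $(p,p_0)$, it is linear; hence the difference $p^{(n)} - p$ is the weak solution with initial datum $p_0^{(n)} - p_0 \in L^1(\mathbb{R}^d)$. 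Applying part 3 to this difference gives $\nrm{p^{(n)}(\cdot|t) - p(\cdot|t)}_{L^1(\mathbb{R}^d)} = \nrm{p_0^{(n)} - p_0}_{L^1(\mathbb{R}^d)}$ for every $t$. The key observation is that the right-hand side is independent of $t$, so $p^{(n)} \to p$ uniformly on $[0,\infty)$ in the $L^1$ norm. Meanwhile, each $p_0^{(n)} \in BV(\mathbb{R}^d)$, so the Lipschitz estimate \eqref{eq:lip} shows $p^{(n)} \in C([0,T];L^1(\mathbb{R}^d))$ for every $T$, hence $p^{(n)} \in C([0,\infty);L^1(\mathbb{R}^d))$. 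Since a uniform limit of continuous Banach-space-valued functions is continuous, $p \in C([0,\infty);L^1(\mathbb{R}^d))$ follows.

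The main obstacle is precisely the upgrade from the $BV$ theory, which only delivers time-continuity (indeed Lipschitz continuity) under the hypothesis $p_0 \in BV$, to merely integrable data. I expect this to hinge entirely on the fact that the solution operator is an $L^1$-isometry \emph{uniformly} in $t$: this is what converts $L^1$-density of the initial data into convergence that is uniform in time, which is exactly the ingredient needed to pass continuity through the limit. The only points worth stating explicitly are that $C_0^\infty(\mathbb{R}^d)$ is an admissible approximating class (dense in $L^1$ and contained in $BV \cap L^1$) and that linearity genuinely licenses subtracting solutions; both are routine.
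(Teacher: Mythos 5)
Your proof is correct. The paper offers no proof at all --- it labels these as ``immediate consequences'' of Theorem~\ref{thm:reg1} --- and your derivation supplies exactly the details that phrase glosses over: part~3 from statement~4 with $f(u)=|u|$, part~2 from the representation \eqref{eq:const} together with volume preservation \eqref{eq:vol} (or, equivalently, statement~4 with $f(u)=\max\{-u,0\}$), and, for part~1, the density-plus-isometry argument that upgrades the Lipschitz-in-time estimate \eqref{eq:lip}, valid only for $BV$ data, to mere continuity for $L^1$ data. You are right that part~1 is the only place where real work is needed and that the uniform-in-$t$ $L^1$-isometry of the solution map is precisely the ingredient that makes the limit pass through; this matches what the authors must have had in mind, since no other route from Theorem~\ref{thm:reg1} to part~1 is available.
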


\begin{remark}
\label{rem:1}
It follows from Corollary \ref{cor:0} is that the operator $S(t)$, defined by \eqref{eq:sdef}, is a Markov operator for any $t \geq 0$, and that $\{ S(t): t \geq 0 \}$ is a continuous semigroup on $L^1(\mathbb{R}^d)$.  
\end{remark}

\subsection{Numerical approximation of the Frobenius-Perron operator}
For computational purposes it is necessary to numerically approximate the Frobeninus-Perron operator.  Naturally, it is desirable that the approximate operator also has the Markov property, since then the image of an initial pdf remains a positive normalized probability distribution function.

Klus \textit{et al.} \cite{KKSarxiv} provide a good summary of methods that have been used to approximate the Frobenius-Perron operator.

The popular, but slowly converging, \emph{Ulam's method} \cite{Ulam,Li1976,DingZhou} is a Galerkin projection method using piecewise constant basis functions.  It preserves the Markov property but requires computing $|E_i \cap Y(E_j,t)|$ for some $E_i,E_j \subset \mathbb{R}^d$ from a partition of the domain.  The volume of $E_i \cap Y(E_j,t)$ can be estimated using Monte Carlo integration, but this adds computational effort and errors.  The effect of the partition choice and Monte Carlo integration have been investigated in \cite{BoseMurray2001,Murray2004,KoltaiPhD}.  

Faster converging higher order orthogonal Galerkin projections do not, in general, preserve the Markov property \cite{DingLi1991}, but some Petrov-Galerkin methods with piecewise linear or quadratic polynomial basis functions and piecewise constant test functions are able to preserve the Markov property \cite[\S 3]{KoltaiPhD}.  The \emph{piecewise linear Markov finite approximation method} \cite{DingLi1991,DingZhou} is an example of such a Petrov-Galerkin method, that preserves the Markov property and achieves faster convergence with respect to mesh size compared to Ulam's method, see \cite{DingZhou,DingLi1991,DingZhou1995} and references therein. 

Other higher order methods have been used to approximate the Frobenius-Perron operator, see references in \cite{KKSarxiv}, but these methods do not always preserve the Markov property.

The finite volume method (FVM) \cite{eymard,leveque2002,Merlet2007,MerletVovelle2007} is a class of methods that has not yet, to the best of our knowledge, been applied to approximating the Frobenius-Perron operator. This is surprising because FVMs are specifically designed to preserve the conserved quantity of a conservation law, such as the continuity equation \eqref{eq:coneq}. In this case, that property implies preserving the total probability, which is one half of the requirements for the Markov property. Further, FVMs achieve conservation over each local region, or `cell', indicating that convergence in a weak sense is a natural property of the method.

We consider a first-order upwind FVM for solving the continuity equation.  This method inherently preserves the integral of the solution, and also preserves positivity of solutions if the time step satisfies a Courant-Friedrichs-Lewy (CFL) condition. The latter result is given in Lemma~\ref{lem:2}. Thus, the discrete solution operator is a Markov operator.

Another advantage of the FVM we consider is that it comes with existing convergence theory, and we show how the existing theory implies convergence of expectations.  There are higher order FVMs than the one we consider here with accompanying convergence theory.  We expect that some of those higher order methods will also be positivity preserving, provided suitable CFL conditions are satisfied. 

Indeed, positivity preserving FVMs already appear in the literature in disguise.  For example, \cite{Delarue2011} requires a CFL condition for their method to satisfy a `maximum priniciple', and \cite[Prop 2.1]{MerletVovelle2007} has a CFL condition for an `order-preserving' method.  CFL conditions are necessary for stability of FVMs \cite[\S 4.4]{leveque2002}, but stability is a weaker condition than positivity preserving.  

Much of the literature on numerical approximation of Frobenius-Perron operators appears to be concerned with approximating stationary densities. The FVM could be used for that by calculating eigenpairs of the finite-dimensional operator.  

However, as mentioned above, our motivation is to approximate the Frobenius-Perron operator for use as the evolution operator in sequential inference problems, particularly when distributions are non-Gaussain and possibly multi-modal. In such cases, the Kalman filter and its extensions are no longer optimal. The example we consider in Section~\ref{sec:example} considers tracking the position and velocity of a pendulum using incomplete observations and an uncertain initial condition.  The resulting sequence of evolving pdfs are multi-modal.  The implementation we present is feasible for modest sized problems when $d\leq 3$, and possibly when $d=4$. Techniques for meshing problems in higher dimensions may enable implementation in higher dimension, though we do not consider that here.

In the next section we define a first-order upwinding FVM, show that it satisfies the Markov property, and then adapt existing covergence theory to show convergence of expectations with respect the pdf from the approximate Frobenius-Perron operator.  Section \ref{sec:sequential} breifly describes sequential inference 
and presents simulations using the FVM to track the motion of a pendulum from incomplete and uncertain observations.

\section{The Finite Volume Method}
\label{sec:method}

The finite volume method is a family of numerical methods for approximating the solution to partial differential equations (PDEs), including conservation laws.  One of their features, that makes them particularly suitable for conservation laws, is that they preserve (up to round-off error) the conserved quantity in the conservation law.  We propose using a FVM to approximate the Frobenius-Perron operator with a finite-dimensional Markov operator.

FVMs essentially discretize the integral form of the continuity equation \begin{equation}
\label{eq:coneq0}
	\frac{\dd}{\dd t} \int_K p \, \dd x + \oint_{\partial K} p v\cdot n \, \dd S = 0 \qquad \forall K \subset \mathbb{R}^d,
\end{equation}
where $\partial K$ is the boundary of $K$ and $n$ is the outward pointing normal.  A FVM only enforces \eqref{eq:coneq0} on the finite volumes $K$ defined by a mesh $\mathcal{T}$.

Define a mesh $\mathcal{T}$ on $\mathbb{R}^d$ as a family of bounded, open, connected, polygonal, disjoint subsets of $\mathbb{R}^d$ such that $\mathbb{R}^d = \cup_{K \in \mathcal{T}} \overline{K}$.  We refer to each $K \in \mathcal{T}$ as a \emph{cell} or \emph{control volume}.  We also assume that the mesh $\mathcal{T}$ satisfies two properties: the common interface between two cells is a subset of a hyperplane of $\mathbb{R}^d$, and the mesh is admissible (see e.g. \cite[Def. 6.1]{eymard}), so that 
$$
	\exists\alpha > 0 : \begin{cases} \alpha h^d \leq |K| \\ 
	|\partial K| \leq \frac{1}{\alpha} h^{d-1} \end{cases} \qquad \forall K \in \mathcal{T}.
$$
where $h = \sup \{ \operatorname{diam}(K): K \in \mathcal{T} \}$, $|K|$ is the $d$-dimensional Lebesgue measure of $K$, and $|\partial K|$ is the $(d-1)$-dimensional Lebesgue measure of $\partial K$.  If cells $K$ and $L$ have a common interface then we say that $L$ is a neighbour of $K$ and we let $E_{KL}$ denote the interface between $K$ and $L$ and $n_{KL}$ denote the outward pointing normal vector from $K$ into $L$.  Let $\mathcal{N}_K$ denote the set of all neighbours of $K$.  

We discretize the time half-line, $\left\{t:t\geq 0\right\}$, using a regular mesh of size $\Delta t > 0$.  

We use the following first-order upwinding scheme for solving \eqref{eq:coneq} so that we can use the existing theoretical analysis of the FVM in \cite{MerletVovelle2007}.  Define
\begin{equation}
\label{eq:fv0}
	p_K^0 = \frac{1}{|K|} \int_K p_0(x) \, \dd x \qquad \forall K \in \mathcal{T},
\end{equation}
then for $k=0,1,2,\dotsc,$ compute $p_K^{k+1}$ from
\begin{equation}
\label{eq:fv1}
	\frac{p_K^{k+1} - p_K^k}{\Delta t} + \frac{1}{|K|} \sum_{L \in \mathcal{N}_K} v_{KL} p_{KL}^k = 0, \qquad \forall K \in \mathcal{T},
\end{equation}
where
$$
	v_{KL} = \int_{E_{KL}} v \cdot n_{KL} \; \dd S
\qquad \mbox{and} \qquad
	p_{KL}^k = \begin{cases}
		p_K^k & \mbox{if $v_{KL} \geq 0$} \\
		p_L^k & \mbox{if $v_{KL} < 0$}.
	\end{cases}	
$$
The approximate solution $p_h:\mathbb{R}^d \times [0,\infty) \rightarrow \mathbb{R}$ to \eqref{eq:coneq} is then defined as a piecewise constant function satisfying 
\begin{equation}
\label{eq:fv3}
	p_h(x|t) = p_K^k \qquad \forall (x,t) \in K \times [k \Delta t, (k+1)\Delta t).
\end{equation}

Let us also define a discrete Frobenius-Perron operator $S_h(t): L^1(\mathbb{R}^d) \rightarrow V_h$ for $t \geq 0$, where $V_h$ is the restriction of $L^1(\mathbb{R}^d)$ to piecewise constant functions on the mesh $\mathcal{T}$, 
$$
	V_h := \{ f \in L^1(\mathbb{R}^d) : f|_K = \text{constant} \; \forall K \in \mathcal{T} \}.
$$
Then, for every $f \in L^1(\mathbb{R}^d)$ and $t \geq 0$ define 
$$
	S_h(t) f := p_h(\cdot|t) 
$$
where $p_h$ is defined by \eqref{eq:fv0}-\eqref{eq:fv3} with $p_0 = f$.
Note that the operator $S_h(t)$ is mesh dependent and there is a slight abuse of notation because two distinct meshes with the same $h$ do not define the same operator.  Also note that the family of operators $\{ S_h(k \Delta t): k = 0,1,2,\dotsc \}$ defines a semigroup on $L^1(\mathbb{R}^d)$ because we have chosen a regular discretization in $t$. Having the discrete operators form a semigroup is convenient, though not at all necessary for performing sequential inference. 

We can also construct a row vector $\bbpp^{k}$ from the values $\{ |K| p_K^{k} : K \in \mathcal{T} \}$ and write \eqref{eq:fv1} as 
\begin{equation}
\label{eq:fv2}
	\bbpp^{k+1} = \bbpp^{k} \mathbf{S} 
\end{equation}
where $\mathbf{S} = \mathbf{I} - \Delta t \mathbf{A}$ is a countably infinite matrix derived from the matrix  $\mathbf{A}$ that contains the upwind and velocity information from \eqref{eq:fv1}.  

\subsection{Conditions for $S_h(t)$ to be a Markov Operator}
We now determine the conditions under which $S_h(t)$ is a Markov operator.

FVMs preserve the integral of the solution, because they are designed so that the flux leaving a cell $K$ across $E_{KL}$ exactly matches the flux entering its neighbour $L$ across $E_{KL}$.  This means that the integral of $p_h(x|t)$ will remain constant for all $t>0$.

\begin{lemma}
\label{lem:1}
For any $\Delta t > 0$ and any $t \geq 0$,
$$
	\int_{\mathbb{R}^d} p_h(x|t) \, \dd x = \int_{\mathbb{R}^d} p_0(x) \, \dd x.
$$
\end{lemma}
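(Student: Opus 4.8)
The plan is to reduce the claim to a discrete mass-conservation identity for the cell averages $\{p_K^k\}$, and then to establish that identity through the flux-cancellation property of the scheme at cell interfaces. First I would use the definition \eqref{eq:fv3} of $p_h$ to write, for any $t \in [k\Delta t,(k+1)\Delta t)$,
$$
	\int_{\mathbb{R}^d} p_h(x|t) \, \dd x = \sum_{K \in \mathcal{T}} |K| \, p_K^k,
$$
and the definition \eqref{eq:fv0} of the initial data to write $\sum_{K \in \mathcal{T}} |K| \, p_K^0 = \sum_{K \in \mathcal{T}} \int_K p_0 \, \dd x = \int_{\mathbb{R}^d} p_0 \, \dd x$. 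Setting $M^k := \sum_{K \in \mathcal{T}} |K| \, p_K^k$, it therefore suffices to prove that $M^{k+1} = M^k$ for every $k \geq 0$, i.e. that the total discrete mass is invariant under one step of \eqref{eq:fv1}.

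To obtain this I would multiply \eqref{eq:fv1} by $|K|$, sum over all $K \in \mathcal{T}$, and show that the aggregated flux contribution vanishes, $\sum_{K \in \mathcal{T}} \sum_{L \in \mathcal{N}_K} v_{KL} \, p_{KL}^k = 0$. The key observation is that, because the mesh tiles all of $\mathbb{R}^d$, every interface $E_{KL}$ is shared by exactly two cells and hence contributes precisely two terms to the double sum, the ordered pairs $(K,L)$ and $(L,K)$; there are no unmatched boundary faces. Since the outward normals on the common face satisfy $n_{LK} = -n_{KL}$, the definition of $v_{KL}$ gives $v_{LK} = -v_{KL}$. Inspecting the upwind rule then shows that $p_{KL}^k$ and $p_{LK}^k$ both select the value of the same (upwind) cell, so the paired contribution is
$$
	v_{KL} \, p_{KL}^k + v_{LK} \, p_{LK}^k = v_{KL}\bigl(p_{KL}^k - p_{LK}^k\bigr) = 0
$$
in each of the cases $v_{KL}>0$, $v_{KL}<0$, and $v_{KL}=0$. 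Summing over all interfaces yields the claimed cancellation, and hence $M^{k+1}=M^k$.

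The step I expect to require the most care is not the interface bookkeeping, which is routine once $v_{LK}=-v_{KL}$ is in place, but rather the justification for rearranging the countably infinite double sum into a sum over unordered interface pairs. This regrouping is legitimate provided everything converges absolutely. I would secure this from $p_0 \in L^1(\mathbb{R}^d)$, which gives $\sum_{K \in \mathcal{T}} |K| \, |p_K^0| \leq \nrm{p_0}_{L^1(\mathbb{R}^d)} < \infty$, combined with the finiteness of each flux $v_{KL}$ (as an integral over a bounded face) and the admissibility assumption, which bounds $|\partial K|$ and the number of neighbours per cell. These bounds let me show that absolute summability of $\{|K|\,p_K^k\}$ is propagated across a single time step, which both legitimises the termwise summation of \eqref{eq:fv1} and underwrites the interface pairing used above. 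Note that no CFL hypothesis is needed here, consistent with the statement holding for any $\Delta t > 0$; the time-step restriction enters only later, for positivity.
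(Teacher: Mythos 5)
Your proof is correct and follows essentially the same route as the paper's: reduce to the discrete mass identity $\sum_K |K| p_K^{k+1} = \sum_K |K| p_K^k$, regroup the double flux sum over interfaces, and cancel paired terms using $v_{LK} = -v_{KL}$ together with the fact that the upwind rule selects the same cell value on both sides. The only difference is that you additionally justify the rearrangement of the countably infinite sum via absolute summability, a point the paper's proof passes over silently; this is a welcome extra precaution rather than a change of method.
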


\begin{proof}
First note that \eqref{eq:fv0} implies $\sum_{K \in \mathcal{T}} |K| p_K^0 = \int_{\mathbb{R}^d} p_0(x) \, \dd x$.

Let $K$ and $L$ be neighbouring cells sharing a common interface $E_{KL}$.  Then their outward pointing normal vectors have opposite sign, $n_{KL} = - n_{LK}$, so $v_{KL} = -v_{LK}$ and hence $v_{KL} p_{KL}^k = - v_{LK} p_{LK}^k$.

If we let $\mathcal{E}$ denote the set of all interfaces in the mesh $\mathcal{T}$ then for any $k =0,1,2,\dotsc$, by \eqref{eq:fv1},
\begin{align*}
	\sum_{K \in \mathcal{T}} |K| p_K^{k+1} 
	&= \sum_{K \in \mathcal{T}} |K| p_K^k + \Delta t \sum_{K \in \mathcal{T}} \sum_{L \in \mathcal{N}_K} v_{KL} p_{KL}^k \\
	&= \sum_{K \in \mathcal{T}} |K| p_K^k + \Delta t \sum_{E_{KL} \in \mathcal{E}} (v_{KL} p_{KL}^k + v_{LK} p_{LK}^k) \\
	&= \sum_{K \in \mathcal{T}} |K| p_K^k.
\end{align*}
The result follows by induction and \eqref{eq:fv3}.
\end{proof}

Note that there is no restriction on the mesh or $h$ for Lemma \ref{lem:1} to hold.

This particular FVM also preserves positivity provided $\Delta t$ is sufficiently small.  This immediately follows from \eqref{eq:fv2} and $\mathbf{S} = \mathbf{I} - \Delta t \mathbf{A}$. We now specify a bound for $\Delta t$ depending on $h$, which provides the CFL condition.  Suppose that for some $\xi \in [0,1)$,
\begin{equation}
\label{cfl}
	\Delta t \sum_{L \in \mathcal{N}_K} (v_{KL})_+ \leq (1-\xi) |K| \qquad \forall K \in \mathcal{T},
\end{equation}
where $(v)_+ = \max \{ 0,v \}$.  
The statement $\Delta t \leq (1-\xi) \alpha^2 \nrm{v}_{L^\infty}^{-1} h$ is more recognizable as a CFL condition of the form $\Delta t/h \leq C$, and, by the properties of an admissible mesh, implies that~\eqref{cfl} is satisfied.  

\begin{lemma}
\label{lem:2}
If $\Delta t$ satisfies CFL condition \eqref{cfl} and $p_0 \geq 0$ then
$$
	p_h(x|t) \geq 0 \qquad \forall x \in \mathbb{R}^d, t>0.
$$
\end{lemma}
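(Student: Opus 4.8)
The plan is to prove the stronger discrete statement that the cell averages satisfy $p_K^k \geq 0$ for every $K \in \mathcal{T}$ and every $k = 0,1,2,\dotsc$; since $p_h$ is piecewise constant with $p_h(x|t) = p_K^k$ on $K \times [k\Delta t, (k+1)\Delta t)$ by \eqref{eq:fv3}, this immediately yields the claim for all $x$ and $t$. I would argue by induction on $k$. The base case $k=0$ is immediate from \eqref{eq:fv0}, since $p_0 \geq 0$ forces each average $p_K^0 = |K|^{-1} \int_K p_0 \, \dd x$ to be nonnegative.

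For the inductive step, assume $p_L^k \geq 0$ for all $L \in \mathcal{T}$. The key algebraic step is to rewrite the upwind flux so that its sign structure becomes transparent. Writing $(v)_- = \max\{0,-v\}$ so that $v = (v)_+ - (v)_-$, the definition of $p_{KL}^k$ gives the identity $v_{KL} p_{KL}^k = (v_{KL})_+ p_K^k - (v_{KL})_- p_L^k$: indeed, when $v_{KL} \geq 0$ the right-hand side reduces to $(v_{KL})_+ p_K^k = v_{KL} p_K^k$, and when $v_{KL} < 0$ it reduces to $-(v_{KL})_- p_L^k = v_{KL} p_L^k$. Substituting into \eqref{eq:fv1} and collecting the $p_K^k$ terms, I would obtain
$$ p_K^{k+1} = \left( 1 - \frac{\Delta t}{|K|} \sum_{L \in \mathcal{N}_K} (v_{KL})_+ \right) p_K^k + \frac{\Delta t}{|K|} \sum_{L \in \mathcal{N}_K} (v_{KL})_- p_L^k. $$

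This exhibits $p_K^{k+1}$ as a linear combination of the nonnegative quantities $p_K^k$ and $\{ p_L^k : L \in \mathcal{N}_K \}$. The off-diagonal coefficients $\frac{\Delta t}{|K|}(v_{KL})_-$ are nonnegative automatically. The only coefficient whose sign is in question is the diagonal one, and this is precisely where the CFL condition \eqref{cfl} enters: it guarantees $\frac{\Delta t}{|K|} \sum_{L \in \mathcal{N}_K} (v_{KL})_+ \leq 1-\xi$, so the diagonal coefficient is at least $\xi \geq 0$. Hence all coefficients are nonnegative, $p_K^{k+1} \geq 0$, and the induction closes.

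The argument is essentially routine once the flux is written in this form; the main (and really the only) obstacle is recognizing that the update can be cast as a nonnegative linear combination of the previous cell values, and that the CFL bound \eqref{cfl} is exactly the condition needed to keep the diagonal coefficient nonnegative. I note that no restriction on the mesh beyond \eqref{cfl} is used and that the argument is independent of the dimension $d$, which is consistent with the sharpness of the CFL threshold discussed before the statement.
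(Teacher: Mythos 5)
Your proof is correct and follows essentially the same route as the paper's: induction on $k$, rewriting the upwind update so that $p_K^{k+1}$ is a nonnegative combination of $p_K^k$ and its neighbours' values, with the CFL condition \eqref{cfl} guaranteeing the nonnegativity of the diagonal coefficient. The paper simply drops the nonnegative neighbour terms and states the resulting inequality directly, whereas you write out the full identity; the substance is identical.
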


\begin{proof}
First note that $p_0 \geq 0$ implies $p_K^0 \geq 0$ for all $K \in \mathcal{T}$.  If $p_K^k \geq 0$ for all $K \in \mathcal{T}$, then using \eqref{eq:fv1} and \eqref{cfl}, for every $K \in \mathcal{T}$,
\begin{align*}
	p_K^{k+1} 
	\geq \left( 1 - \frac{\Delta t}{|K|} \sum_{L \in \mathcal{N}_K} (v_{KL})_+ \right) p_K^k 
	\geq 0.
\end{align*}
The result follows by induction and \eqref{eq:fv3}.
\end{proof}

The following Theorem is a direct consequence of Lemmas \ref{lem:1} and \ref{lem:2}.
\begin{theorem}
\label{thm:1}
For any $t \geq 0$, if $\Delta t$ satisfies the CFL condtion \eqref{cfl}, then $S_h(t)$ is a Markov operator.
\end{theorem}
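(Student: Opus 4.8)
The plan is to verify directly the two defining properties of a Markov operator, letting Lemmas~\ref{lem:1} and~\ref{lem:2} supply all of the analytic content. First I would record linearity of $S_h(t)$: the cell averages \eqref{eq:fv0} depend linearly on $p_0$, the update \eqref{eq:fv1} is linear in the values $p_K^k$ (equivalently, \eqref{eq:fv2} is multiplication by the fixed matrix $\mathbf{S} = \mathbf{I} - \Delta t \mathbf{A}$), and the reconstruction \eqref{eq:fv3} is linear, so $f \mapsto S_h(t)f = p_h(\cdot|t)$ is a linear map on $L^1(\mathbb{R}^d)$ (into $V_h \subset L^1(\mathbb{R}^d)$). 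This is needed because the definition of a Markov operator is stated for a linear operator.

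Next, fix $t \geq 0$ and take any $f \in L^1(\mathbb{R}^d)$ with $f \geq 0$; write $p_h$ for the approximate solution generated by \eqref{eq:fv0}--\eqref{eq:fv3} with $p_0 = f$, so $S_h(t)f = p_h(\cdot|t)$. Since $\Delta t$ satisfies the CFL condition \eqref{cfl} and $p_0 = f \geq 0$, Lemma~\ref{lem:2} gives $p_h(x|t) \geq 0$ for all $x \in \mathbb{R}^d$ and $t > 0$; the remaining case $t \in [0,\Delta t)$ is covered by \eqref{eq:fv0} and \eqref{eq:fv3}, which set $p_h(\cdot|t) = p_K^0$, a cell average of $f \geq 0$, hence nonnegative. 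Thus $S_h(t)f \geq 0$, which is the positivity-preserving half of the definition.

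Finally, for the norm identity I would use positivity to reduce the $L^1$ norm to a bare integral: because $p_h(\cdot|t) = S_h(t)f \geq 0$,
\[
	\nrm{S_h(t)f}_{L^1(\mathbb{R}^d)} = \int_{\mathbb{R}^d} |p_h(x|t)| \, \dd x = \int_{\mathbb{R}^d} p_h(x|t) \, \dd x,
\]
and Lemma~\ref{lem:1} equates the right-hand side with $\int_{\mathbb{R}^d} p_0(x) \, \dd x = \int_{\mathbb{R}^d} f \, \dd x$, which equals $\nrm{f}_{L^1(\mathbb{R}^d)}$ since $f \geq 0$. Hence $\nrm{S_h(t)f}_{L^1(\mathbb{R}^d)} = \nrm{f}_{L^1(\mathbb{R}^d)}$, and together with linearity and positivity this shows $S_h(t)$ is a Markov operator. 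I do not expect a genuine obstacle, as Lemmas~\ref{lem:1} and~\ref{lem:2} already carry the weight; the only points needing care are the conversion from $\nrm{\cdot}_{L^1(\mathbb{R}^d)}$ to a signed integral, which is exactly where nonnegativity is invoked, and the separate handling of the initial time level, since Lemma~\ref{lem:2} as stated asserts the conclusion only for $t > 0$.
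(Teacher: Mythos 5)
Your proposal is correct and follows exactly the route the paper intends: the paper states Theorem~\ref{thm:1} as a direct consequence of Lemmas~\ref{lem:1} and~\ref{lem:2}, which is precisely the decomposition you use (Lemma~\ref{lem:2} for positivity, Lemma~\ref{lem:1} plus nonnegativity for the $L^1$-norm identity). Your extra care about linearity and the time level $t \in [0,\Delta t)$ is sound but not a departure from the paper's argument.
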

An equivalent statement to Theorem \ref{thm:1} is the following Corollary.
\begin{corollary}
\label{cor:1}
If $\Delta t$ satisfies the CFL condition \eqref{cfl} then the matrix $\mathbf{S}$ from \eqref{eq:fv2} is a stochastic matrix, i.e. the elements of $\mathbf{S}$ are non-negative and the row sums are all equal to $1$.
\end{corollary}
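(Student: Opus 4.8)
The plan is to read the two defining properties of a stochastic matrix—nonnegative entries and unit row sums—directly off the two halves of the Markov property already established in Lemmas~\ref{lem:1} and~\ref{lem:2}, using the identification of $\bbpp^k$ with the vector of cell masses. Writing $q_K^k = |K| p_K^k = \int_K p_h(x \mid k\Delta t)\,\dd x$ for the components of $\bbpp^k$, the scheme \eqref{eq:fv1}, written in matrix form \eqref{eq:fv2}, reads $\bbpp^{k+1} = \bbpp^k \mathbf{S}$. Although $\mathbf{S}$ is countably infinite, each cell $K$ has only finitely many neighbours in $\mathcal{N}_K$, so every row of $\mathbf{S}$ has finitely many nonzero entries and the notion of a row sum is well defined.

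Both properties then follow from testing the one-step map on single-cell initial data. Fix a cell $L$ and take $\bbpp^0 = e_L$, the unit vector supported on $L$; this corresponds to the probability density $p_0 = |L|^{-1}\mathbf{1}_L \geq 0$, which has $\bbpp^0 \mathbf{1} = \int_{\mathbb{R}^d} p_0 \, \dd x = 1$, where $\mathbf{1}$ is the all-ones column vector. After one step, $\bbpp^1 = e_L \mathbf{S}$ is exactly the $L$-th row of $\mathbf{S}$. Since $\Delta t$ satisfies the CFL condition \eqref{cfl}, Lemma~\ref{lem:2} gives $p_h \geq 0$ and hence $\bbpp^1 \geq 0$, so every entry of row $L$ is nonnegative. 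Lemma~\ref{lem:1} gives $\bbpp^1 \mathbf{1} = \bbpp^0 \mathbf{1} = 1$, and $\bbpp^1 \mathbf{1}$ is the sum of the entries of row $L$, so that row sums to $1$. As $L$ is arbitrary, $\mathbf{S}$ has nonnegative entries and unit row sums, i.e. $\mathbf{S}$ is stochastic.

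Since the statement is a restatement of Theorem~\ref{thm:1}, the only real subtlety is bookkeeping for the countably infinite matrix—ensuring that a row sum is meaningful and that the lemmas, phrased for $L^1$ densities, descend to statements about individual rows; row-finiteness and single-cell test densities handle both. If one prefers to avoid invoking the lemmas as black boxes, the entries of $\mathbf{S}$ can instead be read off \eqref{eq:fv1} directly by writing $\sum_{L \in \mathcal{N}_K} v_{KL} p_{KL}^k = \sum_{L \in \mathcal{N}_K}\big((v_{KL})_+ p_K^k - (v_{KL})_- p_L^k\big)$ with $(v)_- := \max\{0,-v\}$, giving diagonal entries $\mathbf{S}_{KK} = 1 - \tfrac{\Delta t}{|K|}\sum_{L \in \mathcal{N}_K}(v_{KL})_+$ and off-diagonal entries $\mathbf{S}_{LK} = \tfrac{\Delta t}{|L|}(v_{KL})_-$. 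Nonnegativity of the diagonal is then immediate from \eqref{cfl}, nonnegativity of the off-diagonal entries is clear, and each row sum collapses to $1$ upon using $v_{KL} = -v_{LK}$ (equivalently $(v_{KL})_- = (v_{LK})_+$), which is exactly the flux-cancellation identity underlying Lemma~\ref{lem:1}. I expect the main obstacle, such as it is, to be purely notational: correctly matching the index convention in $\bbpp^{k+1} = \bbpp^k \mathbf{S}$ so that it is the \emph{rows} of $\mathbf{S}$, and not its columns, that inherit the stochastic property.
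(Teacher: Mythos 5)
Your proposal is correct and takes essentially the same route as the paper: the paper also applies Lemmas~\ref{lem:1} and~\ref{lem:2} to the unit row vector $\mathbf{e}_i$ (i.e.\ single-cell initial data), merely phrasing the two conclusions as proofs by contradiction rather than directly. Your added care in identifying $e_L$ with the density $|L|^{-1}\mathbb{I}_L$ and your optional explicit computation of the entries of $\mathbf{S}$ are both sound but not needed beyond what the paper does.
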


\begin{proof}
We proceed with a proof by contradiction.  Let $\mathbf{S}_{ij}$ denote the entries of $\mathbf{S}$.  Suppose that $\mathbf{S}_{ij}$ is negative for some $i$ and $j$ and let $\mathbf{e}_i$ be the unit row vector with zeros everywhere except the $i^\mathrm{th}$ position.  Then $(\mathbf{e}_i \mathbf{S})_j = \mathbf{S}_{ij} <0$, a contradiction by Lemma \ref{lem:2}.  Therefore, all of the elements of $\mathbf{S}$ are non-negative.  

Now let $\mathbf{s}_i$ denote the $i^\text{th}$ row of $\mathbf{S}$ and suppose $\sum_j (\mathbf{s}_i)_j \neq 1$.  By Lemma \ref{lem:1}, $1 = \sum_j (\mathbf{e}_i)_j = \sum_j ( \mathbf{e}_i \mathbf{S} )_j = \sum_j (\mathbf{s}_i)_j \neq 1$, a contradiction.  Therefore all row sums of $\mathbf{S}$ are equal to $1$.
\end{proof}

If $\mathbf{S}$ is \emph{irreducible} and \emph{aperiodic} (which is the definition of \emph{primitive} for countably infinite matrices), and \emph{positive recurrent}, then $\mathbf{S}$ has a single maximal eigenvalue equal to $1$ with a right eigenvector $\mathbf{1}$ and a left eigenvector that corresponds to the stationary distribution of $S_h(t)$.  If $\mathbf{S}$ is not \emph{aperiodic}, then the eigenvalue may have multiplicity greater than $1$.  For this result and definitions, see \cite[\S 5 and Thm. 1.1]{Seneta}, in particular \cite[Thm. 5.5]{Seneta}.  

Here, irreducibility requires that for any indices $i$ and $j$, there exists a finite sequence of indices $\{ i, i_1, i_2, \dotsc, i_m, j\}$ such that $\mathbf{S}_{ii_1} \mathbf{S}_{i_1i_2} \dotsm \mathbf{S}_{i_mj} > 0$.  This is an analogous to the state evolving from state $i$ to state $j$ in finite time.  Aperiodicity is satisfied for $\mathbf{S}$ if there exists an index $i$ such that $\mathbf{S}_{ii} >0$, which is satisfied if $v_{KL} >0$ for some edge $E_{KL}$.  Positive recurrence requires that a system starting in state $i$ has finite expected return time.  

When the mesh does not align with the contours of the Hamiltonian for a Hamiltonian system, it is possible that $S_h(t)$ is asymptotically stable as $t \rightarrow \infty$, whereas $S(t)$ is not.  This is because the discrete operator effectively allows the discrete system to move between contours of the Hamiltonian, and if a sequence of cells allows traversing all contours then the discrete system may traverse all state space, whereas the continuous system is restricted to contours so cannot traverse state space.

Convergence of a numerical method for a linear PDE is usually a consequence of consistency and stability established by application of the Lax equivalence theorem, see  e.g. \cite[Sec. 8.3.2]{leveque2002}).  Stability of our scheme is a consequence of Theorem \ref{thm:1}, but it is known that first order upwind FVMs lack consistency (the local truncation error does not converge to $0$ as $\Delta t,h \rightarrow 0$) \cite{Bouche2011}.  For this reason, a convergence proof for our FVM is quite delicate.  Here we state the convergence results in \cite{MerletVovelle2007} for our FVM.  Results in \cite{Delarue2011,Merlet2007} and \cite[Chap. 6]{eymard}, and references therein, show convergence results for various first order upwind FVMs and CFL conditions.  

We restrict ourselves to the case when $\opdiv v = 0$ as required by \cite{MerletVovelle2007}.  The theory for the case when $\opdiv v \neq 0$ is more technical, for example `viscosity solutions' or `weak entropy solutions' must be defined to ensure existence and uniqueness of solutions to the continuity equation, see e.g. \cite{leveque2002} and \cite{eymard}. 

When $\opdiv v = 0$, \eqref{eq:fv1} defines the same FVM as in \cite{MerletVovelle2007} since for each $K \in \mathcal{T}$, 
$$
	\sum_{L \in \mathcal{N}_K} v_{KL} p_{KL}^k = \sum_{L \in \mathcal{N}_K} (v_{KL})_+ p_K^k + (v_{KL})_- p_L^k 
	= \sum_{L \in \mathcal{N}_K} (v_{KL})_- (p_L^k - p_K^k),
$$
where $(v)_- = \min \{0,v\}$.  
Also, if $\opdiv v = 0$, $\sum_{L \in \mathcal{N}_K} (v_{KL})_+ = \sum_{L \in \mathcal{N}_K} |(v_{KL})_-|$ and CFL condition \eqref{cfl} is the same condition as \cite[Eqn. 1.9]{MerletVovelle2007}.

From \cite[Thm. 2]{MerletVovelle2007} we have the following Theorem.
\begin{theorem}
\label{thm:2}
Suppose $\opdiv v = 0$ and let $f \in BV(\mathbb{R}^d)$.  If $\Delta t$ satisfies CFL condition \eqref{cfl} for some $\xi \in (0,1)$, then for any $t \geq 0$,
$$
	\nrm{ S(t)f - S_h(t)f }_{L^1(\mathbb{R}^d)} \leq C \xi^{-1} \nrm{f}_{TV} ( t^{1/2} h^{1/2} + \xi^{1/2} t h ).
$$
\end{theorem}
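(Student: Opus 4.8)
The plan is to reduce the statement to the convergence theorem of Merlet and Vovelle, \cite[Thm.~2]{MerletVovelle2007}, by verifying that under the standing hypothesis $\opdiv v = 0$ the discrete solution $S_h(t)f = p_h(\cdot|t)$ produced by \eqref{eq:fv0}--\eqref{eq:fv3} is \emph{identical} to the finite volume approximation analysed there, and that $S(t)f = p(\cdot|t)$ is exactly the weak solution to which their estimate applies. The first of these is essentially the algebraic rewriting already recorded above: when $\opdiv v = 0$ the upwind flux sum satisfies
$$
	\sum_{L \in \mathcal{N}_K} v_{KL} p_{KL}^k = \sum_{L \in \mathcal{N}_K} (v_{KL})_- (p_L^k - p_K^k),
$$
which is the flux used in \cite{MerletVovelle2007}, and the balance $\sum_{L \in \mathcal{N}_K} (v_{KL})_+ = \sum_{L \in \mathcal{N}_K} |(v_{KL})_-|$ makes CFL condition \eqref{cfl} coincide verbatim with \cite[Eqn.~1.9]{MerletVovelle2007}. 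Thus the two numerical schemes, together with their CFL restrictions, agree.

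Second, I would confirm that all hypotheses of \cite[Thm.~2]{MerletVovelle2007} are in force. The velocity field has the regularity $v \in W^{1,\infty}(\mathbb{R}^d;\mathbb{R}^d)$ with $\opdiv v = 0$, and the mesh $\mathcal{T}$ is admissible with regularity constant $\alpha$, matching their mesh assumptions. By Theorem~\ref{thm:reg1} and Corollary~\ref{cor:0}, for $f \in BV(\mathbb{R}^d) \subset L^1(\mathbb{R}^d)$ the continuity equation \eqref{eq:coneq} has a unique weak solution $p \in C([0,\infty);L^1(\mathbb{R}^d))$ with the transport representation \eqref{eq:const}, volume preservation \eqref{eq:vol}, and the total-variation bound \eqref{eq:tv}; this is precisely the solution denoted $S(t)f$ in \eqref{eq:sdef}, and it is the exact solution against which \cite[Thm.~2]{MerletVovelle2007} measures the scheme. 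With the scheme, the CFL condition (for the margin $\xi \in (0,1)$), and the solution all matched, the $L^1$ error estimate transfers directly, yielding the stated bound with $C$ absorbing the dependence on $T$, $\nrm{v}_{W^{1,\infty}}$, $d$ and $\alpha$.

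The delicate point is bookkeeping rather than new analysis: I must check that the sign conventions for $v_{KL}$ and the upwind choice $p_{KL}^k$ reproduce their numerical flux exactly, that their total-variation seminorm is the one we write as $\nrm{\cdot}_{TV}$, and that the role of the CFL slack $\xi$ (which enters the bound through the factors $\xi^{-1}$ and $\xi^{1/2}$) is the same in both formulations. Were one instead to prove the estimate from scratch, the genuine obstacle would be the half-order rate $h^{1/2}$: since the first-order upwind scheme is not pointwise consistent \cite{Bouche2011}, one cannot invoke the Lax equivalence theorem, and the sharp $L^1$ bound must be obtained by a weak (Kruzhkov-type doubling-of-variables) error analysis in which the discrete flux error is controlled against the $BV$ bound \eqref{eq:tv}. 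That argument is carried out in \cite{MerletVovelle2007}, so here it suffices to invoke it.
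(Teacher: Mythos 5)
Your proposal is correct and follows exactly the route the paper takes: the paper introduces this theorem with ``From \cite[Thm.~2]{MerletVovelle2007} we have the following Theorem,'' having just established in the preceding paragraphs the same two facts you verify (that for $\opdiv v = 0$ the upwind flux rewrites as $\sum_{L \in \mathcal{N}_K} (v_{KL})_- (p_L^k - p_K^k)$, matching the scheme of Merlet--Vovelle, and that the CFL condition \eqref{cfl} coincides with their Eqn.~1.9). Your additional bookkeeping about the regularity hypotheses and the identification of $S(t)f$ with the unique weak solution is implicit in the paper but entirely consistent with it.
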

Note $\xi = 0$ is not allowed in the CFL condition for Theorem \ref{thm:2}.

A consequence of convergence of our FVM is the following Theorem.
\begin{theorem}
\label{thm:expconv}
Suppose that $0 \leq t \leq T$ and $p_h \rightarrow p$ in $L^\infty([0,T],L^1(\mathbb{R}^d))$ as $\Delta t,h \rightarrow 0$, and $g \in L^{\infty}_{\mathrm{loc}}(\mathbb{R}^d)$.  If there is a constant $C$ such that
\begin{equation}
\label{eq:33}
	\mathrm{E}_{p(\cdot|t)}[|x|] \leq C, \quad \mathrm{E}_{p(\cdot|t)}[g^2] \leq C, \quad
	\mathrm{E}_{p_h(\cdot|t)}[|x|] \leq C, \quad \mathrm{E}_{p_h(\cdot|t)}[g^2] \leq C,
\end{equation}
independently of $\Delta t$ and $h$, then 
$$
	\mathrm{E}_{p_h(\cdot|t)}[g] \rightarrow \mathrm{E}_{p(\cdot|t)}[g] \qquad \mbox{as $\Delta t,h \rightarrow 0$.}
$$
\end{theorem}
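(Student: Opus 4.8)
The plan is to estimate the difference of expectations directly as $\left| \mathrm{E}_{p_h(\cdot|t)}[g] - \mathrm{E}_{p(\cdot|t)}[g] \right| = \left| \int_{\mathbb{R}^d} g\,(p_h - p)\,\dd x \right|$, and to split the integral over a large ball $B_R = \{x : |x| \le R\}$ and its complement. The difficulty is that $g$ is only locally bounded, so the $L^1$ convergence $p_h \to p$ cannot be paired with a global bound on $g$; instead one must control the contribution from $|x| > R$ separately using the uniform moment bounds~\eqref{eq:33}. The argument fixes $R$ first (to make the tails small), and only then sends $\Delta t, h \to 0$.

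On the ball $B_R$, since $g \in L^\infty_{\mathrm{loc}}(\mathbb{R}^d)$ we have $M_R := \nrm{g}_{L^\infty(B_R)} < \infty$, so
$$
	\left| \int_{B_R} g\,(p_h - p)\,\dd x \right| \le M_R \nrm{p_h(\cdot|t) - p(\cdot|t)}_{L^1(\mathbb{R}^d)} \le M_R \nrm{p_h - p}_{L^\infty([0,T],L^1(\mathbb{R}^d))},
$$
which tends to $0$ as $\Delta t, h \to 0$ for each fixed $R$, by the convergence hypothesis. For the tail I would use that $p$ and $p_h$ are nonnegative densities (by Corollary~\ref{cor:0} and Lemma~\ref{lem:2}) together with Cauchy--Schwarz,
$$
	\int_{|x|>R} |g|\, p_h\,\dd x \le \left( \int_{|x|>R} g^2 p_h\,\dd x \right)^{1/2} \left( \int_{|x|>R} p_h\,\dd x \right)^{1/2} \le C^{1/2} \left( \int_{|x|>R} p_h\,\dd x \right)^{1/2},
$$
where the last inequality uses $\mathrm{E}_{p_h(\cdot|t)}[g^2] \le C$. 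The remaining factor is estimated by Markov's inequality: since $|x|>R$ implies $\mathbf 1_{|x|>R} \le |x|/R$, we get $\int_{|x|>R} p_h\,\dd x \le R^{-1}\mathrm{E}_{p_h(\cdot|t)}[|x|] \le C/R$. Hence $\int_{|x|>R}|g|\,p_h\,\dd x \le C\,R^{-1/2}$, and the identical estimate holds with $p$ in place of $p_h$, using the two moment bounds for $p$ in~\eqref{eq:33}.

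Combining the three pieces gives, for every $R>0$,
$$
	\left| \mathrm{E}_{p_h(\cdot|t)}[g] - \mathrm{E}_{p(\cdot|t)}[g] \right| \le M_R \nrm{p_h - p}_{L^\infty([0,T],L^1(\mathbb{R}^d))} + 2C\,R^{-1/2}.
$$
Given $\varepsilon>0$, I would first choose $R$ so large that $2C R^{-1/2} < \varepsilon/2$, and then, with $R$ (and hence $M_R$) held fixed, choose $\Delta t, h$ small enough that the first term is below $\varepsilon/2$; this yields the claimed convergence. I expect the main obstacle to be precisely this tail control: because $g$ may grow at infinity, the naive bound by $\nrm{g}_{L^\infty}\nrm{p_h-p}_{L^1}$ is unavailable, and it is the combination of the second-moment bound on $g$ (entering through Cauchy--Schwarz) with the first-moment bound on $|x|$ (entering through Markov's inequality) that makes the tail estimate uniform in $\Delta t$ and $h$, so that the two limits $R\to\infty$ and $\Delta t,h\to 0$ can be taken in the correct order.
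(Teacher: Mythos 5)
Your proposal is correct and follows essentially the same route as the paper's own proof: the same split into $B_R$ and its complement, the same H\"older bound $\nrm{g}_{L^\infty(B_R)}\nrm{p_h-p}_{L^1}$ on the ball, the same Cauchy--Schwarz plus Markov's inequality giving the $2CR^{-1/2}$ tail bound, and the same order of limits ($R$ fixed first, then $\Delta t, h \to 0$). No substantive differences to report.
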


\begin{proof}
For any $R > 0$, let $B_R = \{x \in \mathbb{R}^d: |x| \leq R \}$ and $B_R^c = \mathbb{R}^d \backslash B_R$, and let $\mathbb{I}_S$ be the indicator function for set $S$.  Then, using H\"{o}lder's inequality, the Cauchy-Schwarz inequality, and Markov's inequality,
\begin{align*}
	\left| \mathrm{E}_{p_h(\cdot|t)}[g] - \mathrm{E}_{p(\cdot|t)}[g] \right| 
	&\leq  \int_{B_R} \left | g(x) (p_h(x|t) - p(x|t)) \right| \dd x  + \left| \mathrm{E}_{p_h}[\mathbb{I}_{B_R^c} g] \right| + \left| \mathrm{E}_{p}[\mathbb{I}_{B_R^c} g] \right| \\
	& \leq \nrm{g}_{L^{\infty}(B_R)} \nrm{p_h(\cdot|t)-p(\cdot|t)}_{L^1(\mathbb{R}^d)} + \sqrt{ \mathrm{E}_{p_h}[\mathbb{I}_{B_R^c}] \mathrm{E}_{p_h}[g^2] } + \sqrt{ \mathrm{E}_{p}[\mathbb{I}_{B_R^c}] \mathrm{E}_{p}[g^2]} \\
	& = \nrm{g}_{L^{\infty}(B_R)} \nrm{p_h(\cdot|t)-p(\cdot|t)}_{L^1(\mathbb{R}^d)} + \sqrt{ \mathrm{P}_{p_h}[|x| > R] \mathrm{E}_{p_h}[g^2]} + \sqrt{ \mathrm{P}_{p}[|x|>R] \mathrm{E}_{p}[g^2]} \\
	& \leq \nrm{g}_{L^{\infty}(B_R)} \nrm{p_h(\cdot|t)-p(\cdot|t)}_{L^1(\mathbb{R}^d)} + 2 C R^{-1/2}.
\end{align*}
Then, for any $\epsilon > 0$, first choose $R$ such that $2 C R^{-1/2} < \epsilon /2$, and then choose $\Delta t$ and $h$ such that $\nrm{g}_{L^{\infty}(B_R)} \nrm{p_h(\cdot|t)-p(\cdot|t)}_{L^1(\mathbb{R}^d)} < \epsilon/2$, then $\left| \mathrm{E}_{p_h(\cdot|t)}[g] - \mathrm{E}_{p(\cdot|t)}[g] \right| < \epsilon$, hence result. 
\end{proof}

\begin{remark}
If $p_h$ and $p$ have compact support then \eqref{eq:33} is satisfied for all $g \in L^{\infty}_{loc}$.  If $p_h \rightarrow p$ in $L^\infty([0,T]\times \mathbb{R}^d)$ then the result holds for any $g \in L^1_{\mathrm{loc}}(\mathbb{R}^d)$.
\end{remark}

\begin{remark}
Although Theorem \ref{thm:expconv} does not explicitly require that $\Delta t$ and $h$ satisfy a CFL condition, it is implicit in the assumption that $p_h \rightarrow p$ in $L^\infty([0,T],L^1(\mathbb{R}^2))$ as $\Delta t,h \rightarrow 0$.
\end{remark}

Using Theorem \ref{thm:2} we can specify a convergence result in the case when $p_h$ and $p$ have compact support.

\begin{theorem}
\label{thm:3}
Let $g \in L^{\infty}_{loc}(\mathbb{R}^d)$ and $H,T < \infty$.  If $\opdiv v = 0$, $f \in BV(\mathbb{R}^d)$ has compact support, and there exists $\xi \in (0,1)$ such that $\Delta t$ satisfies the CFL condition \eqref{cfl}, then there exists a constant $C$ independent of $h$ and $t$ such that 
$$
	\left| \mathrm{E}_{S_h(t)f}[g] - \mathrm{E}_{S(t)f}[g] \right| \leq C h^{1/2}  \qquad \forall t \in [0,T], h \in (0,H].
$$ 
\end{theorem}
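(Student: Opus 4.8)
The plan is to combine the pointwise-in-time estimate established inside the proof of Theorem~\ref{thm:expconv} with the $L^1$ convergence rate of Theorem~\ref{thm:2}, using compact support to make the tail contributions vanish for a ball of \emph{fixed} radius. Write $p = S(t)f$ and $p_h = S_h(t)f$. The intermediate estimate from the proof of Theorem~\ref{thm:expconv} gives, for any $R>0$,
\begin{equation*}
	\left| \mathrm{E}_{p_h(\cdot|t)}[g] - \mathrm{E}_{p(\cdot|t)}[g] \right| \leq \nrm{g}_{L^\infty(B_R)} \nrm{p_h(\cdot|t) - p(\cdot|t)}_{L^1(\mathbb{R}^d)} + \left| \mathrm{E}_{p_h}[\mathbb{I}_{B_R^c} g] \right| + \left| \mathrm{E}_{p}[\mathbb{I}_{B_R^c} g] \right| .
\end{equation*}
If both $p(\cdot|t)$ and $p_h(\cdot|t)$ are supported in $B_R$, the last two terms are identically zero, and since $R$ will be fixed and $g \in L^\infty_{\mathrm{loc}}(\mathbb{R}^d)$, the factor $\nrm{g}_{L^\infty(B_R)}$ is a finite constant independent of $h$ and $t$.

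First I would fix the radius $R$. Let $B_{R_0} \supseteq \operatorname{supp}(f)$. For the continuous solution, the representation $p(x|t) = p_0(X(x,t))$ from \eqref{eq:const} together with the bound $|X(x,t)-x| \leq C_0 t$ from \eqref{eq:bounds} forces $\operatorname{supp}(p(\cdot|t)) \subseteq B_{R_0 + C_0 T}$ for all $t \in [0,T]$. For the discrete solution I would invoke the finite propagation speed of the upwind scheme: from \eqref{eq:fv1}, $p_K^{k+1}$ depends only on $p_K^k$ and on $p_L^k$ for $L \in \mathcal{N}_K$, so the set of cells on which $p_h$ is nonzero grows by at most one cell-layer per time step; combined with the CFL scaling $\Delta t \asymp h$ tying the $t/\Delta t$ steps to the cell diameter $h$, the support of $p_h(\cdot|t)$ stays inside a ball of radius at most $R_0 + C T$, uniformly in $h \in (0,H]$ and $t \in [0,T]$. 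Taking $R$ to be the larger of these two radii places both supports in a single fixed ball $B_R$.

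With this $R$ the tail terms vanish and the estimate collapses to $\left| \mathrm{E}_{p_h(\cdot|t)}[g] - \mathrm{E}_{p(\cdot|t)}[g] \right| \leq \nrm{g}_{L^\infty(B_R)} \nrm{S_h(t)f - S(t)f}_{L^1(\mathbb{R}^d)}$. Now I would insert Theorem~\ref{thm:2}: since $\Delta t$ satisfies \eqref{cfl} for the given $\xi \in (0,1)$, one has $\nrm{S(t)f - S_h(t)f}_{L^1} \leq C \xi^{-1} \nrm{f}_{TV}(t^{1/2}h^{1/2} + \xi^{1/2} t h)$. Using $t \leq T$ and $h = h^{1/2} h^{1/2} \leq H^{1/2} h^{1/2}$ for $h \in (0,H]$, the bracket is bounded by $(T^{1/2} + \xi^{1/2} T H^{1/2}) h^{1/2}$, so $\nrm{S_h(t)f - S(t)f}_{L^1} \leq C' h^{1/2}$ with $C'$ depending only on $T$, $H$, $\xi$ and $\nrm{f}_{TV}$. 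Multiplying by $\nrm{g}_{L^\infty(B_R)}$ yields the claim with $C = \nrm{g}_{L^\infty(B_R)} C'$, independent of $h$ and $t$.

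The hard part is the uniform support control for $p_h$ in the second step. The subtlety is that the combinatorial one-cell-per-step spreading of the upwind stencil is insensitive to the size of $\Delta t$, so without a relation pinning $\Delta t$ to $h$ the nominal support could in principle spread across $t/\Delta t$ cell-layers; it is precisely the CFL scaling $\Delta t \asymp h$ that keeps the number of layers proportional to $t/h$, and hence the physical spread proportional to $t$. An alternative that sidesteps committing to a fixed support ball is to verify the moment hypotheses \eqref{eq:33} directly: multiplying \eqref{eq:fv1} by the cell values of $|x|$, pairing the interface fluxes as in the proof of Lemma~\ref{lem:1}, and summing over $k$ produces a telescoping estimate in which the factors of $\Delta t$ cancel against the $t/\Delta t$ steps, giving $\mathrm{E}_{p_h(\cdot|t)}[|x|] \leq \mathrm{E}_{p_h(\cdot|0)}[|x|] + C t$ uniformly in $h$; this is the estimate underlying the compact-support remark following Theorem~\ref{thm:expconv}.
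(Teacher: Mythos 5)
Your proof is correct and follows essentially the same route as the paper: both reduce the expectation difference via H\"{o}lder's inequality to $\nrm{g}_{L^\infty(B)}\nrm{S_h(t)f - S(t)f}_{L^1(\mathbb{R}^d)}$ over a fixed bounded set $B$ containing the supports of $S(t)f$ and $S_h(t)f$, and then invoke Theorem~\ref{thm:2} together with $t \leq T$ and $h \leq H$ to absorb the bracket into $C h^{1/2}$. The only substantive difference is that you supply a justification (characteristics for the continuous solution, finite propagation speed of the upwind stencil for the discrete one) for the uniform-in-$h$ compact support claim that the paper merely asserts; as you yourself flag, the discrete half of that argument really requires $\Delta t$ to be bounded \emph{below} by a multiple of $h$, which the CFL condition \eqref{cfl} alone does not provide, so this is a genuine detail worth making explicit rather than a flaw relative to the paper's own argument.
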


\begin{proof}
Since $f$ has compact support and $v \in W^{1,\infty}(\mathbb{R}^d; \mathbb{R}^d)$ it follows that $S(t)f$ and $S_h(t)f$ have compact support for all $t \in [0,T]$, so let $B \subset \mathbb{R}^d$ be a bounded set containing the support of $S(t)f$ and $S_h(t)f$ for all $t \in [0,T]$.  Then for all $t \in [0,T]$, by H\"{o}lder's inequality,
$$
	\left| \mathrm{E}_{S_h(t)f}[g] - \mathrm{E}_{S(t)f}[g] \right| \leq \nrm{g}_{L^\infty(B)} \nrm{S_h(t)f - S(t)f}_{L^1(\mathbb{R}^d)}.
$$
The result then follows directly from Theorem \ref{thm:2}.
\end{proof}

In practice, the continuity equation is solved on a bounded domain, so $\mathbb{R}^d$ must be truncated and boundary conditions introduced.  It is relatively easy to show that Lemmas \ref{lem:1} and \ref{lem:2} hold when either periodic boundary conditions or  $v \cdot n = 0$ (Neumann conditions) are specified on a bounded domain.  However, the regularity of $v$ and $p$ could be affected by imposing these boundary conditions, which may have an effect on convergence of the FVM.  Analyses of FVMs on bounded domains is given in~\cite{Vovelle2002}.  See also \cite{eymard}.

Periodic and Neumann boundary conditions prevent probability from entering or leaving the system.  Another possibility, when the support of the initial probability distribution is compact and $v$ is bounded so that the solution always has compact support, is to truncate the domain so that it contains the region of support of the solution and specify homogeneous Dirichl\'{e}t boundary conditions.  If the continuity equation is evolved for only a short time, the region of support remains in the interior of the computational domain, though this strategy may not be practical for long integration times.

When the domain is bounded, then $\mathbf{S}$ in~\eqref{eq:fv2} is a finite dimensional matrix and we can use \cite[Thm. 1.1 or Thm. 1.5]{Seneta} to obtain results about the stationary distributions of $S_h(t)$.  In particular, positive recurrence is irrelevant (because it is guaranteed) for finite matrices $\mathbf{S}$.

The FVM \eqref{eq:fv0}-\eqref{eq:fv3} requires exact integration of $p_0$ over cells to compute $p_K^0$ and integration of $v$ along cell edges to compute $v_{KL}$.  This may not be possible in practice so it may be necessary to replace integration with quadrature for these calculations.  Although convergence analysis of these modified FVMs appears to be an open problem, we have had success (not presented here) in proving that the discrete solution operator is Markov in the case when $v_{KL}$ is computed with the mid-point rule, $v_{KL} = v(x_{KL}) |E_{KL}|$ where $x_{KL}$ is the mid-point of edge $E_{KL}$, using very similar arguments to the proofs of Lemmas \ref{lem:1} and \ref{lem:2}.

Higher order FVMs are also possible, and there is convergence analysis available for these cases, for example see \cite{eymard,leveque2002}.  However, some work is still required to prove that each FVM corresponds to a Markov operator since the existing theory has not been directed to proving that a method is positivity preserving.

\section{An Example of Sequential Inference}
\label{sec:sequential}

Given a sequence of independent noisy observations $\{z_k\}$ made at increasing times $\{t_k\}$, we would like to sequentially infer the state of the observed dynamical system.  Assuming that we have the posterior pdf $\rho(x_{k-1}|\zalmost)$ over the previous state $x_{k-1} = x(t_{k-1})$ (from previous observations $\zalmost$), and we know the likelihood function $\rho(z_k|x_k) = \rho(z_k|x_k,\zalmost)$ for observation $z_k$, then Bayes' theorem gives the following expression for the posterior pdf over $x_k = x(t_k)$ given observations $\zall$,
$$
	\rho(x_k | \zall) = \frac{\rho(z_k|x_k) \rho(x_k | \zalmost) }{ \rho(z_k|\zalmost)}
$$
where 
\begin{align}
\rho(x_k|\zalmost) &= \int \rho(x_k|x_{k-1},\zalmost)\rho(x_{k-1}|\zalmost)\,\dd x_{k-1},\label{eq:top}\\
\rho(z_k|\zalmost) &= \int \rho(z_k|x_k)\rho(x_k|\zalmost)\,\dd x_k.\label{eq:bottom}
\end{align}

Since $\{ x(t) : t \geq 0 \}$ is governed by a dynamical system we can replace \eqref{eq:top} at each step in the sequential inference with the action of the Frobenius-Perron operator
$$
	\rho(x_k|\zalmost) = S(t_k - t_{k-1}) \rho(x_{k-1}|\zalmost).
$$
In the following example we will approximate operation by the Frobenius-Perron operator using the FVM to solve the continuity equation with initial condition given by $\rho(x_{k-1}|\zalmost)$.

\subsection{Tracking a Pendulum}
\label{sec:example}

We now demonstrate the properties of the FVM for use in sequential inference in a simple pendulum example with incomplete and uncertain observations.

For $d=2$, let $x = (x_1,x_2)$ be a vector of angular displacement and angular velocity of a pendulum.  Let $x_1=0$ correspond to the pendulum hanging vertically downwards.  The velocity field for the simple pendulum is
\begin{equation}\label{eq:pendulum}
v(x) = \left(x_2, -\frac{g}{l}\sin(x_1)\right),
\end{equation}
where $g$ is the acceleration due to gravity and $l$ the length of the pendulum.  We use $g = 1$ and $l = 1$.  We also restrict the computational domain to $x \in \Omega = [-\pi,\pi)^2$ with periodic boundary conditions at $x_1 = \pm \pi$ and Neumann boundary conditions at $x_2 = \pm \pi$.  The Frobenius-Perron operator describing the motion of the pendulum is thus well-defined.

To approximate the Frobenius-Perron operator using the FVM we define a uniform lattice of $N \times N$ square cells with side length $h = 2\pi/N$. Then $\sum_{L \in \mathcal{N}_K} (v_{KL}^k)_+ \leq (\pi+1)h$, so the CFL condition \eqref{cfl} is satisfied for some $\xi \in [0,\pi)$ provided that
\begin{equation}
\label{cfl2}
	\Delta t \leq (1-\xi) \frac{h}{\pi + 1}.
\end{equation}
For our simulations we use $\Delta t = (2\pi+1)^{-1} h$ which corresponds to \eqref{cfl2} with $\xi = (2\pi+1)^{-1} \pi$.

With these choices for computational domain, boundary conditions, mesh, and CFL condition, the FVM defines a Markov operator (it's easy to prove similar results to Lemmas \ref{lem:1} and \ref{lem:2}), and, although the conditions for Theorem \ref{thm:2} are not satisfied, we observe in Table \ref{tab:1} that the FVM for this example appears to satisfy the conclusion of Theorem \ref{thm:2}, i.e., a convergence rate of at least order $h^{1/2}$.

\subsubsection{Convergence of the FVM operator}

Table \ref{tab:1} values were computed by evaluating $S(0.25)p_0$ where $p_0$ is a Gaussian pdf with mean $(0.6 \pi,0)$ and covariance $0.64 \mathbf{I}$, discretized and truncated to the computational domain.  The $p_K^0$ values in \eqref{eq:fv0} were approximated with the mid-point rule in each $K$.  No Bayes' steps were used (no observations were made) when computing the values in Table \ref{tab:1}.
\begin{table}
\caption{Table of $L^1(\Omega)$ norms and effective order computed at time $t = \pi/4$. $N = 2\pi/h$ is the number of cells per dimension on the grid.}
\begin{center}
    \begin{tabular}{ | c | c | c | }
    \hline
     & {$L^1$ Norm} & Effective order \\
    $N=1/h$  & $\nrm{p_h - p_{h/2}}_{L^1}$ & $-\log_2\frac{\nrm{ p_{2h} - p_{h} }}{\nrm{p_h - p_{h/2}}}$  \\ \hline
    $50$ & $0.25398$ &  - \\ \hline
    $100$ & $0.19553$ & $0.3855$  \\ \hline
    $200$ & $0.14697$ & $0.4037$ \\ \hline
    $400$ & $0.10799$ & $0.4446$ \\ \hline
    $800$ & $0.062628$ & $0.7860$ \\ \hline
    $1600$& $0.038029$ & $0.7197$ \\ \hline
    $3200$& $0.020514$ & $0.8905$ \\ \hline
    $6400$& - & - \\
    \hline
    \end{tabular}
    \end{center}
\label{tab:1}
\end{table}

The computed values for effective order suggest that the order of our method is approximately $3/4$, but we would need to perform additional simulations with even smaller $h$ to be sure.   Theorem \ref{thm:2} suggests the method has order at least $1/2$, so our observations agree with theory, but they also suggest that Theorem \ref{thm:2} is not sharp for this example.

\subsubsection{Inference from observations}

We simulated the `true' pendulum with initial condition $x=(0.2\pi,0)$, and simulated incomplete and noisy observations of $|x_1|$ at the six times $t_k=k\frac{2 \pi}{7}$, $k=1,2,\ldots,6$ in the interval $t\in[0,2\pi)$, each observation is a draw from $z_k\sim N(|x_1(t_k)|, 0.1^2)$. This situation is analogous to making observations of the force on the pendulum string, which also provides no information about the sign of angular displacement.   We assumed an initial (prior) distribution that is a Gaussian with mean $(0,0)$ and covariance $0.64 \mathbf{I}$, truncated to the computational domain. 

Figure \ref{fig:1} shows the pdf over the angular displacement and velocity of the pendulum resulting from sequential inference at various times.  As time progresses we see that the distribution becomes multimodal due to the ambiguity arising from our deficient observations and symmetric prior.  
\begin{figure}
\begin{center}
\includegraphics[height=8cm]{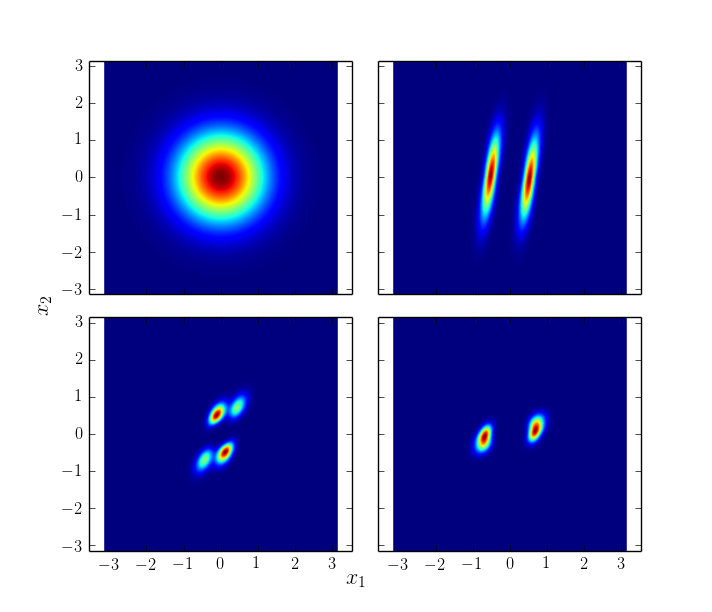} 
\end{center}
\caption{Probability density functions over angular displacement and velocity of the pendulum at times $t=0$ and $t \approx \pi/6$, $\pi/3$ and $\pi$, respectively, top left, top right, bottom left, bottom right.}
\label{fig:1}
\end{figure}

Figure \ref{fig:2} shows the mean and standard deviation of the distributions over $x$ for the pendulum, with the true path also shown. Note that the symmetry of multi-modal distributions means that the mean angular displacement and velocity are always zero, and hence are no use as an estimate of position. In contrast, the multi-modality and location of modes is clear from the pdfs in Figure~\ref{fig:1}. This figure also demonstrates why a sequential inference method that is based on uni-modality, such as the Kalman filter and its extensions, would be a very poor choice for state estimation in this example. It is interesting that the standard deviation, which is the uncertainty in location, eventually provides quite a good estimate for the amplitude of motion of the pendulum.  
\begin{figure}
\begin{center}
\includegraphics[height=7.3cm]{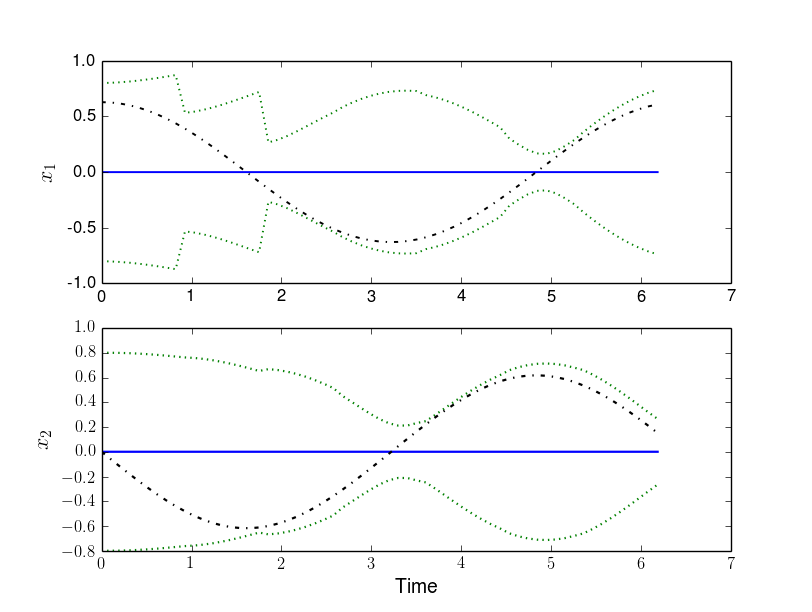} 
\end{center}
\caption{Mean (blue line) and standard deviation (dotted green) about the mean for the sequence of distributions occurring within the sequential inference process, for angular displacement (top) and angular velocity (bottom) of the pendulum. True path is shown with a dot-dashed black line. Steps in the standard deviation are a result of the Bayes step at observations.}
\label{fig:2}
\end{figure}

\section{Conclusion}
\label{sec:conclusion}

We have demonstrated how a FVM can approximate the Frobenius-Perron operator associated with a dynamical system, by approximating the solution to the continuity equation.  

When discretizing the Frobenius-Perron operator it is desirable that the discrete operator also satisfies the Markov property, and we have shown how a FVM achieves this.  

It is perhaps surprising that FVMs have not previously been used to approximate the Frobenius-Perron operator because, in general, they always preserve the conserved quantity of a conservation law.  For the continuity equation this means they always preserve the integral of the solution, one half of the requirements for the Markov property.

The other half of the Markov property requires that the method is positivity preserving.  We showed that the FVM we considered achieves this provided a CFL condition is satisfied.  CFL conditions are not new for FVMs as they are often required for stability.

Another advantage of FVMs is that they often come with pre-existing convergence theory that can be easily adapted to approximating the Frobenius-Perron operator.  

Ulam's method, requires one to compute $|E_i \cap Y(E_j,t)|$ for some $E_i,E_j \subset \mathbb{R}^d$ from a partition of the domain, or approximate these values using Monte Carlo integration.  Similar calculations are required for the piecewise linear Markov finite approximation method and other methods based on Galerkin projection.  The FVM we considered only requires averaging the initial condition over each cell in \eqref{eq:fv0} and evaluating integrals of $v$ along edges between cells to compute $v_{KL}$.  Both of these calculations can be replaced by quadrature to any desired accuracy and their errors included in the analysis.  In this sense, the FVM is easier to implement than some other methods.

With an example we showed how the FVM is well-suited to evolving a pdf that is multi-modal in sequential inference problems.

Particle filters can also be used to perform sequential inference, and typically converge with order ${1/2}$ with respect to the number of particles in the system (see e.g., \cite[\S 2.5]{doucet}).  The FVM presented here only has theoretical order $1/4$ ($3/8$ in practice for the pendulum example) with respect to the number of cells.  However, the FVM presented here could be more efficient than particle filters for some problems.  Also, higher order FVMs and/or adaptive meshes could improve the efficiency of FVMs.

Each inference step with the finite volume approach avoids resampling technicalities of particle filters and we have shown how to adapt existing technology for grid-based PDE solvers to perform grid or mesh-based sequential inference.

We have not proven \emph{Ulam's conjecture} for the FVM approximation of the Frobenius-Perron operator, i.e. that the sequence of stationary distributions for $S_h(t)$ converges to a stationary distribution of $S(t)$ as $h \rightarrow 0$, if $S(t)$ has a stationary distribution.  This is an avenue for further study.

\section*{Funding}

This work was supported by the Royal Society of New Zealnd's Marsden Fund [UOO1015].



\bibliographystyle{imaiai}
\bibliography{nortonbib}

\end{document}